\newtheorem{theorem}{Theorem}
\newtheorem{definition}{Definition}
\begin{document}

\title{Resource Allocation for Device-to-Device Communications Underlaying Heterogeneous Cellular Networks Using Coalitional Games}

\author{Yali~Chen,
        Bo~Ai,~\IEEEmembership{Senior Member,~IEEE},
        Yong~Niu,~\IEEEmembership{Member,~IEEE},
        Ke Guan,
        and Zhu~Han,~\IEEEmembership{Fellow,~IEEE}
\thanks{Y. Chen, B. Ai, Y. Niu, and K. Guan are with the State Key Laboratory of Rail Traffic Control and Safety, Beijing Engineering Research Center of High-speed Railway Broadband Mobile Communications, and the School of Electronic and Information Engineering, Beijing Jiaotong University, Beijing 100044, China (e-mails:
niuy11@163.com; boai@bjtu.edu.cn).}

\thanks{Z. Han is with the University of Houston, Houston, TX 77004 USA (e-mail: zhan2@uh.edu), and also with the Department of Computer Science and Engineering, Kyung Hee University, Seoul, South Korea.}

\thanks{This study was supported by the National key research and development program under Grant 2016YFB1200102-04 and 2016YFE0200900; and by the Fundamental Research Funds for the Central Universities Grant 2016RC056; and by NSFC under Grant 61725101 and Beijing natural fund under Grant L172020; and by the State Key Lab of Rail Traffic Control and Safety under Grant 2017JBM332, RCS2017ZZ005; and by the State Key Lab of Rail Traffic Control and Safety (Contract No. RCS2017ZT009), Beijing Jiaotong University; and by Teaching reform project under Grant 134493522 and 134496522; and by the China Postdoctoral Science Foundation under Grant 2017M610040; and by the National S\&T Major Project 2016ZX03001021-0033; and by the Natural Science Foundation of China (Grant No. : U1334202), and partially by US NSF CNS-1717454, CNS-1731424, CNS-1702850, CNS-1646607, ECCS-1547201. (\emph{Corresponding authors: B. Ai, Y. Niu.})}
}

\maketitle

\begin{abstract}

Heterogeneous cellular networks (HCNs) with millimeter wave (mmWave) communications included are emerging as a promising candidate for the fifth generation mobile network. With highly directional antenna arrays, mmWave links are able to provide several-Gbps transmission rate. However, mmWave links are easily blocked without line of sight. On the other hand, D2D communications have been proposed to support many content based applications, and need to share resources with users in HCNs to improve spectral reuse and enhance system capacity. Consequently, an efficient resource allocation scheme for D2D pairs among both mmWave and the cellular carrier band is needed. In this paper, we first formulate the problem of the resource allocation among mmWave and the cellular band for multiple D2D pairs from the view point of game theory. Then, with the characteristics of cellular and mmWave communications considered, we propose a coalition formation game to maximize the system sum rate in statistical average sense. We also theoretically prove that our proposed game converges to a Nash-stable equilibrium and further reaches the near-optimal solution with fast convergence rate. Through extensive simulations under various system parameters, we demonstrate the superior performance of our scheme in terms of the system sum rate compared with several other practical schemes.

\end{abstract}

\begin{IEEEkeywords}
Device-to-device communication, game theory, HCNs, millimeter wave communication, resource allocation
\end{IEEEkeywords}

\section{Introduction}\label{S1}

With the increasing proliferation of mobile devices with high capabilities and intelligence, the global mobile traffic is expected to experience a remarkable and continuous growth in the next few years. As predicted by Cisco, the traffic generated from wireless and mobile devices is expected to constitute a major percentage of the total internet protocol (IP) traffic by 2020. It is also estimated that the number of devices accessed to IP networks will be three times of the global population in 2020 and the mobile traffic will grow at an annual rate of $53\%$ until 2020 \cite{Cisco}. At the same time, the millimeter wave (mmWave) has huge bandwidth, and therefore, much higher network capacity can be achieved \cite{Future}. There are already several standards defined for indoor wireless personal area networks (WPANs) or wireless local area networks (WLANs) in the mmWave band, such as ECMA-387 \cite{ECMA387}, IEEE 802.15.3c \cite{IEEE802153c}, and IEEE 802.11ad.
Thus, in order to keep up with the explosive growth of mobile devices and data traffic, one key enabling solution is to exploit HCNs in both the cellular band and the mmWave band.

HCNs operating in both conventional cellular band and in the mmWave band, can improve the system performance effectively. Two kinds of networks offer different advantages. For example, cellular network provides higher link reliability, while mmWave communication has obvious advantages in the transmission rate. However, a most common concern is that mmWave communications suffer a much larger distance-dependent propagation loss due to the high carrier frequency \cite{niu2017energy,ai2017indoor,YongNiu}. For example, the free space path loss at the 60 GHz band is 28 dB more than that at 2.4 GHz \cite{singh2011interference}. To combat severe channel attenuation, we utilize the highly directional antennas and the beamforming technology at both the transmitter and the receiver \cite{roh2014millimeter}. Moreover, mmWave communication typically requires line of sight (LOS) communication.

D2D communications underlaying the HCN, as a method of great potential to offload traffic from the base station (BS), can improve network performance and provide a better user experience \cite{tan2017joint,song2013wireless}. Under the coverage of BS, user equipments (UEs) in physical proximity communicate with each other directly using the resources in the mmWave band or sharing resources with cellular users. The integration of D2D communications into HCNs has the advantage of allowing for high data rate, low delay and power consumption transmission for popular proximity-based applications \cite{Energy}. Consequently, these high quality D2D links generate the hop gain by transmitting data signals directly between two closely located terminals without involving a centralized controller. On the other hand, reuse gain is achieved by simultaneously using the same radio resource for cellular users and D2D pairs. Additionally, the D2D-enabled HCN also facilitates new types of peer-to-peer services.

\begin{figure}[htbp]
\begin{center}
\includegraphics*[width=0.9\columnwidth,height=2.5in]{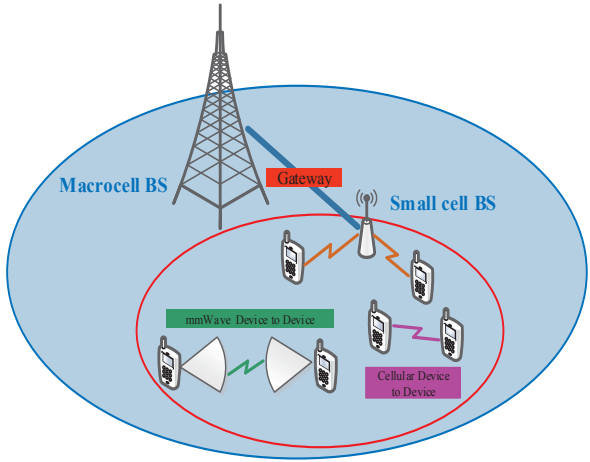}
\end{center}
\caption{The D2D-enabled HCN underlaying the macrocell.} \label{fig1}
\end{figure}

In Fig. \ref{fig1}, we show a typical scenario of the D2D-enabled HCN underlaying the macrocell. Cellular users are associated with BS of the small cell, which is connected to the BS of the macrocell via the gateway. In the D2D-enabled HCN, interference produced by D2D communications hampers the performance of cellular communications. Intra-cell interference, which is referred to the interference between users as the result of spectrum sharing, is considered to be an important and complex problem in HCNs, especially the interference between D2D pairs \cite{ma2015interference}. Therefore, it is necessary to investigate and properly deal with the interference problems such that the benefits of proximity transmissions can be fully exploited. To date, extensive works have been undertaken on the power control \cite{ramezani2017joint,kaufman2008cellular,yu2014joint}, resource allocation \cite{xu2013efficiency,wang2016social,wang2013dynamic,li2014coalitional,zhao2015social} and association techniques among the cellular users and D2D pairs to mitigate the interference and obtain the maximum system achievable transmission rate. Besides, based on the differences between cellular D2D networks and mmWave D2D networks, how to utilize the advantages of both networks to optimize the sub-channel allocation under HCNs indeed brings great challenges.

In this paper, we consider D2D communications in the HCN combining mmWave and cellular networks for uplink resource allocation, and then formulate the problem of maximizing the system sum rate via resource allocation into a nonlinear integer programming problem. With the complicated interferences considered among cellular users and D2D pairs, we address the problem of resource allocation for multiple cellular users and D2D pairs from a game theory point of view using coalition formation game \cite{saad2011coalitional}. The coalition game, which is widely used in wireless communications, for example, the resource allocation problems, allows several players cooperatively to form a coalition in order to optimize resource allocation, manage the interference, and further enhance the system performance. Then, we develop a coalition formation algorithm to achieve the Nash-stable equilibrium for the proposed coalition game. The main contributions of this paper can be summarized as follows.

\begin{itemize}
\item We introduce the coalition formation game to model the D2D communications underlaying HCN consisting of multiple cellular users and D2D pairs. Based on the established model, we investigate the resource allocation problems for the realistic HCNs.

\item We formulate the problem of D2D resource allocation underlaying HCN aiming to enable massive connectivity and maximize the system sum rate. Then, we utilize the advantages of cellular D2D network and mmWave D2D network, and develop a coalition formation algorithm to implement efficient resource allocation with low computation complexity. We show that the proposed algorithm converges to a Nash-stable coalition structure and achieves a near-optimal solution with fast convergence rate.

\item Through extensive simulations under various system parameters, we evaluate the system performance of our proposed coalition game based approach compared with other practical schemes.

\end{itemize}

The rest of the paper is organized as follows. In Section~\ref{S2}, we present an overview of the related work. Section~\ref{S3} introduces the system model and formulates the resource allocation problem. The coalition game with transferable utility and corresponding algorithm is proposed in Section~\ref{S4}. We analyze the properties of the proposed algorithm in Section~\ref{S5}. Section~\ref{S6} gives the performance evaluation of our proposed scheme compared with other schemes under various system parameters. Finally, the conclusions of this paper are drawn in Section~\ref{S7}.

\section{Related Work}\label{S2}

There have been several related works studying resource allocation and interference management for D2D communications. For example, Ramezani-Kebrya \emph{et al.} \cite{ramezani2017joint} proposed an efficient power control algorithm and jointly optimized the power of a cellular user and a D2D pair aiming at maximizing their sum rate, while providing a lower bound on the signal to interference plus noise ratio (SINR) requirements. Kaufman \emph{et al.} \cite{kaufman2008cellular} proposed that D2D users determined their path loss to the BS according to the received power in the downlink, and then adjusted the transmit power so that the interference caused by D2D users to the BS is minimized. Yu \emph{et al.} \cite{yu2014joint} improved the system performance in terms of throughput by investigating power control, channel assignment and mode selection. Xu \emph{et al.} \cite{xu2013efficiency} proposed an innovative reverse iterative combinatorial auction mechanism to allocate resources to D2D communications underlaying downlink cellular networks. The above works have shown that involving D2D communications can improve the overall system performance by proper resource allocation and reasonable management of interference among cellular and D2D pairs. Compared with the related work, our paper aims to solve the problem of D2D resource allocation in HCNs, and there is no doubt that the interference problems are of great complexity. In this paper, we consider a scheme from the view point of game theory to maximize the system sum rate.

Game theory offers a set of mathematical tools to study the complex interactions among interdependent rational players and to predict their choices of strategies \cite{Wilson1992Game}. Besides, with many different game methods included, the game theory has attracted considerable attentions. The related researches utilizing the game theory in the field of wireless communication include the analysis of the resource allocation problems, especially the spectrum allocations in the cellular and heterogeneous networks. Wang \emph{et al.} \cite{wang2016social} studied the community-aware D2D resource allocation and further proposed a two-step coalition game to implement effective resource allocation underlaying cellular networks. Wang \emph{et al.} \cite{wang2013dynamic} proposed a cooperative coalition game to cope with the problem that on-board units might not have the ability to complete the download task of the entire large file from the roadside unit when moving at high speed in vehicular ad hoc networks. In order to improve spectrum efficiency, Li \emph{et al.} \cite{li2014coalitional} proposed a coalition formation game to address the problem of uplink resource
allocation for multiple cellular users and D2D pairs. Combining both the interference constraints in the physical domain and social connections in the social domain, Zhao \emph{et al.} \cite{zhao2015social} proposed a social group utility maximization game based D2D resource allocation scheme to maximize each D2D user's social group utility. However, the coalition game in related work aims to find a coalitional structure that maximizes the individual payoffs of the players, while we entail finding a structure that maximizes the total utility.

MmWave communication is considered to be one of the most concerned candidate technologies for the fifth generation (5G). The fact that lower frequencies of the radio spectrum have become saturated and are unable to meet the exponential growth in traffic demand, has motivated the exploration of the under-utilized mmWave frequency spectrum for future high-speed broadband cellular networks \cite{Zhenyu,Zhenyu2,Guan2017}. However, mmWave communications have unique characteristics that are different from traditional cellular networks. On the one hand, mmWave communication is typically characterized by transmission and reception with very narrow beams and highly directional antenna. On the other hand, mmWave communication suffers a much larger propagation loss due to the high carrier frequency, and mmWave links are easily blocked by human body and other obstacles. Consequently, network congestion may happen in mmWave networks \cite{BlockageRobust}. There are some works on utilizing mmWave band in wireless network. Ai \emph{et al.} \cite{ai2017indoor} performed some measurements and simulations on indoor mmWave massive multiple-input multiple-output (MIMO) channel at a band in 26 GHz. Shariat \emph{et al.} \cite{shariat2016radio} presented some important findings in designing radio resource management (RRM) functionalities of mmWave in conjunction with heterogeneous network in both backhaul and access links. Rebato \emph{et al.} \cite{rebato2017hybrid} proposed an effective novel hybrid spectrum access scheme consisted of the exclusive low frequency carrier and the pooled high frequency carrier for mmWave networks. Niu \emph{et al.} \cite{niu2017energy} developed an energy-efficient mmWave backhauling scheme to deal with the joint optimization problem of concurrent transmission scheduling and power control of small cells densely deployed in HCNs.

\section{System Overview and Problem Formulation}\label{S3}

\begin{figure}[htbp]
\begin{center}
\includegraphics*[width=0.9\columnwidth,height=2.5in]{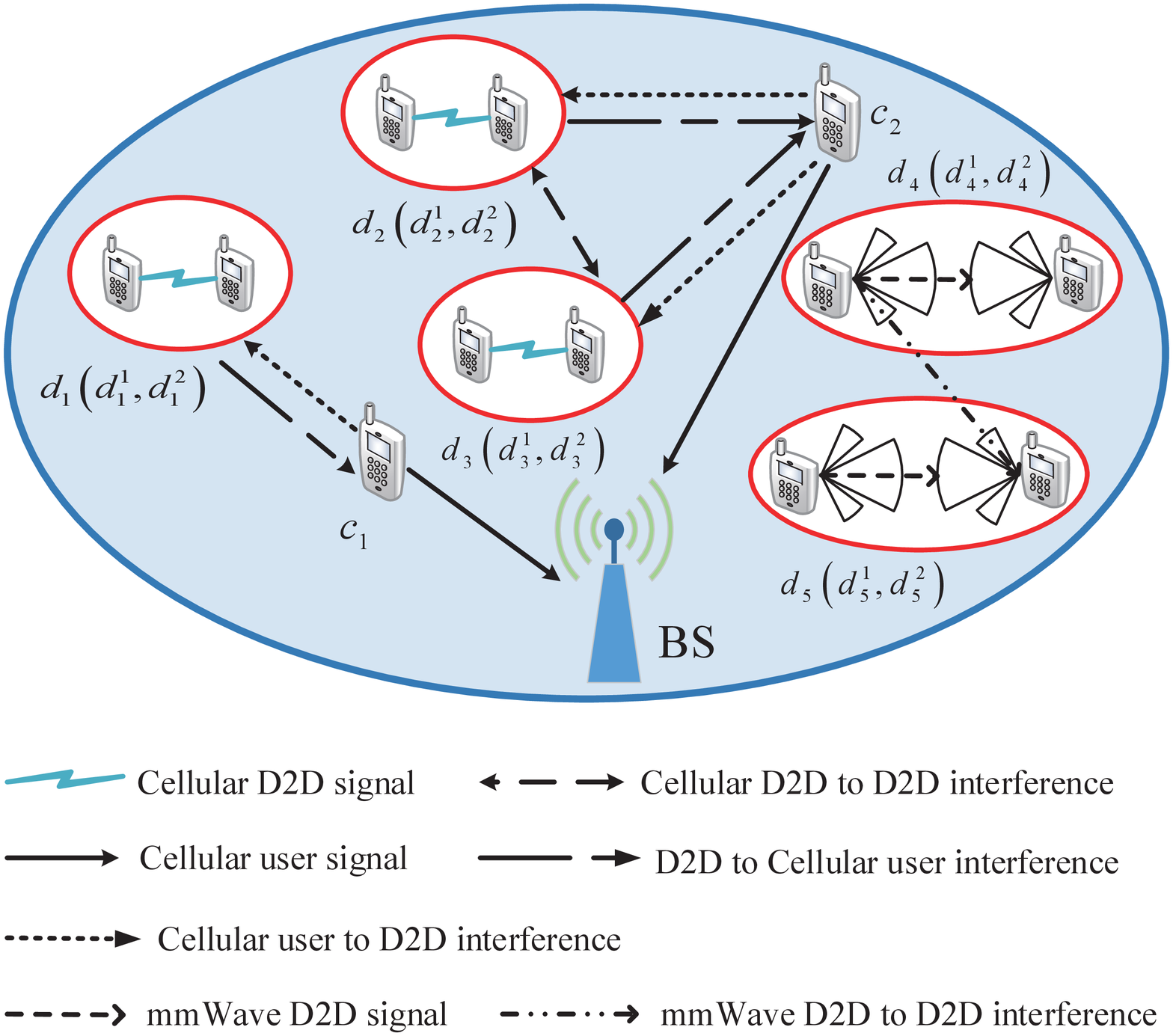}
\end{center}
\caption{Illustration of the resource sharing of D2D communications underlaying HCN, where there are 2 cellular users, $c_{1}$ and $c_{2}$, and 5 D2D pairs.} \label{fig2}
\end{figure}

In this section, we first give a system overview for D2D communications underlaying HCN, and then formulate the resource allocation problem by defining optimization utility function that reflects the system performance in terms of system sum rate.

\subsection{System Description}\label{S3-1}
We consider a scenario of a single cell coupled with all the users under its coverage. In our investigated system, we focus on the intra-cell interference generated by the users sharing the same frequency band. Since the heterogeneous network consists of the cellular band and mmWave band, there are two kinds of modes to select for each D2D pair. One is to share the uplink spectrum resource of one cellular user, and the other is to use the resource in mmWave band. On the one hand, we consider the cellular D2D network, where the BS is equipped with omnidirectional antennas for cellular communications. We assume that the cellular users share their uplink resources with D2D communications when the cellular access mode is selected by D2D pairs, and one cellular user's spectrum resource can be shared with multiple D2D pairs to achieve the maximum spectral efficiency, while we also assume that a D2D pair shares no more than one cellular user's uplink resource for the purpose of reducing interference caused by D2D communications and decreasing the corresponding complexity. In addition, it is supposed that the subcarrier channels occupied by cellular users are mutually independent for analytical tractability. In other words, D2D pairs will not interfere with each other when sharing different cellular users' uplink spectrum resources in cellular D2D network. On the other hand, we consider the mmWave D2D network, which doesn't require infrastructure such as BSs. Millimeter wave communication is equipped with the highly directional antenna in order to achieve the directional transmission and reception between D2D users in mmWave band \cite{niu2017energy}. With highly directional antenna arrays in mmWave, D2D pairs are able to share the same radio resource. As illustrated in Fig. \ref{fig2}, there exists two cellular users $c_{1}$ and $c_{2}$, and the D2D pair $(d^{1}_{1},d^{2}_{1})$ occupies the spectrum resource of $c_{1}$, while D2D pairs $(d^{1}_{2},d^{2}_{2})$ and $(d^{1}_{3},d^{2}_{3})$ occupy the spectrum resource of $c_{2}$. Besides, D2D pairs $(d^{1}_{4},d^{2}_{4})$ and $(d^{1}_{5},d^{2}_{5})$ use the spectrum resource in the mmWave band. On the whole, we only need to focus on the analysis of the signal interference between D2D pairs in mmWave band and the signal interference among cellular users and D2D pairs in cellular band.

In such a system, we concentrate on assigning appropriate uplink spectrum resources occupied by the cellular users or mmWave radio resource to D2D pairs in order to enhance the whole network performance. Since the D2D pair shares the same spectrum resources with the cellular users or with other D2D pairs in mmWave band, as the result of that, the system performance will be reduced to compensate the interference. In order to maximize the system performance, what we should do is to properly manage the interference and limit the interference as much as possible. As shown in Fig. \ref{fig2}, there are three kinds of interference in cellular D2D network, such as cellular D2D to D2D interference, D2D to cellular user interference and cellular user to D2D interference. The cellular user and its corresponding D2D pairs interfere with each other because they share the same uplink spectrum resources. The received signals at the BS from the cellular user $c$ are interfered by the transmitters of D2D pairs sharing the same spectrum resource of $c$. The signal at the D2D receiver $d$ is interfered by the cellular user $c$ and other D2D links sharing the same spectrum resource of $c$. On the other hand, there exists just one kind of interference in mmWave D2D network and the D2D pairs are mutually interfered as they use the same spectrum resource in mmWave band.

\subsection{System Model}\label{S3-2}

In the system, we assume there are $C$ cellular users labeled as the set of $\textbf{C}=\{c_{1},c_{2},...c_{C}\}$ that share their uplink resources with D2D pairs. Moreover, we denote the set of $D$ pairs of D2D users by $\textbf{D}$, written as $\textbf{D}=\{d_{1},d_{2},...d_{D}\}$. Every D2D pair independently randomly chooses to share the resource of any cellular user $c_{i},\forall{c_{i}}\in{\textbf{C}}$ or the resource in mmWave band. To better reflect the spectrum resource usage relationship, we define a binary variable $a_{d}$ for each D2D pair $d$ to represent whether the cellular or mmWave frequency band is selected. If the cellular frequency band is selected, $a_{d}=1$; otherwise, $a_{d}=0$. Besides, we define another binary variable $x_{c,d}$ to indicate whether the uplink spectrum resource of cellular user $c$ is shared by $d$, $\forall{c}\in{\textbf{C}}, {d}\in{\textbf{D}}$, where if $x_{c,d}=1$, it means that the resource blocks of cellular user $c$ are allocated to the D2D pair $d$, otherwise, $x_{c,d}=0$. We analyze the constraints of $x_{c,d}$. First, each D2D pair can share the uplink spectrum resource from no more than one cellular user, which can be expressed as $\sum\limits_{{c}\in{\textbf{C}}}x_{c,d}\leq{1}, \forall{d}\in{\textbf{D}}$. Second, $x_{c,d}$ is equal to $a_{d}$ for all D2D pairs, which can be expressed as $\sum\limits_{{c}\in{\textbf{C}}}x_{c,d}=a_{d}$. On the one hand, sharing the spectrum resource of one cellular user by multiple D2D pairs is allowed in this sharing model in order to increase the spectrum resource reuse ratio. On the other hand, it is also possible for D2D pairs to occur on the same part of the mmWave spectrum resource.

To maximize the network performance in terms of system sum rate, we should consider the key part of SINR. Assuming that in the cellular D2D network, we adopt the channel model of Rayleigh for small-scale fading with the propagation loss factor $n$, under which the instantaneous channel taps are the function of time and spatial locations \cite{luo2010simulation}. The power or second-order statistic of the channel, denoted by ${|h_{0}|}^2$, is a constant within the BS's coverage area. For communication link $i$, we denote its sender and receiver by $s_{i}$ and $r_{i}$, respectively. According to the path loss model, we derive the expression of the received power at $r_{i}$ from $s_{i}$ as $P^{c}_{r}(i,i)={|h_{0}|}^2\cdot{G_{t}}\cdot{G_{r}}\cdot{l}^{-n}_{ii}\cdot P_{c}$, where $P_{c}$ is the cellular transmission power, $l_{ii}$ is the distance between $s_{i}$ and $r_{i}$, $n$ is the path-loss exponent, $h_{0}$ is a complex Gaussian random variable with unit variance and zero mean, $G_{t}$ is the transmit antenna gain and $G_{r}$ is the receive antenna gain. Both of them are constants. The received SINR at $r_{i}$ from $s_{i}$ can be expressed as
\begin{equation}
SINR^{c}_{i}=\frac{{|h_{0}|}^2G_{t}G_{r}{l}^{-n}_{ii}{P_{c}}}{P^{c}_{int,i}+N_{0c}W_{c}}, \label{eq1}
\end{equation}
where $P^{c}_{int,i}$ is the interference signal power received by user $r_{i}$, $N_{0c}$ is the cellular onesided power spectral density of white Gaussian noise, and $W_{c}$ is the cellular subcarrier bandwidth.

Similarly, we assume that in the mmWave D2D network, the received power at $r_{i}$ from $s_{i}$ can be calculated as
\begin{equation}
P^{m}_{r}(i,i)=k_{0}G_{t}(i,i)G_{r}(i,i){l}^{-n}_{ii}P_{m}. \label{eq2}
\end{equation}
For two mutually independent communication links $i$ and $j$, the received interference at $r_{i}$ from $s_{j}$ can be calculated as
\begin{equation}
P^{m}_{r}(j,i)={\rho}k_{0}G_{t}(j,i)G_{r}(j,i){l}^{-n}_{ji}P_{m}, \label{eq3}
\end{equation}
where $k_{0}$ is a constant coefficient and proportional to ${(\frac{\lambda}{4\pi})}^2$ ($\lambda$ denotes the wavelength), $\rho$ denotes the multi-user interference (MUI) factor related to the cross correlation of signals from different links, and $P_{m}$ is the transmitted power of mmWave \cite{qiao2012stdma}. Unlike the assumption in cellular D2D network, the antenna gain of $s_{i}$ in the direction of $s_{i}\rightarrow r_{i}$ is denoted by $G_{t}(i,i)$ and is no longer a constant. The antenna gain of $r_{i}$ in the direction of $s_{i}\rightarrow r_{i}$ is denoted by $G_{r}(i,i)$. Thus, the received SINR at $r_{i}$ can be expressed as
\begin{equation}
SINR^{m}_{i}=\frac{P^{m}_{r}(i,i)}{{P^{m}_{int,i}}+N_{0m}W_{m}}, \label{eq4}
\end{equation}
where $P^{m}_{int,i}$ is the interference signal power received by user $r_{i}$, $N_{0m}$ is the mmWave onesided power spectral density of white Gaussian noise, and $W_{m}$ is the bandwidth of mmWave communication.

In the case of cellular communication, we abbreviate the transmit and receive antenna gain of device and BS as $G_{0}$ and $G_{b}$, respectively, since they are taken the fixed value in cellular D2D network. Then, we are able to obtain the uplink transmission rate corresponding to cellular users and D2D pairs. The BS receiving signal from the cellular user subjects to interference from D2D pairs referred to that occupying the same spectrum resource with cellular user. Therefore, the interference power at the BS for cellular user $c$ can be expressed as
\begin{equation}
P_{int,c}=\sum_{{d}\in{\textbf{D}}}x_{c,d}{|h_{0}|}^2{G_{0}}G_{b}{l}^{-n}_{db}P_{c}. \label{eq5}
\end{equation}
According to Shannon's channel capacity, the uplink channel rate of the cellular user $c$, denoted by $R_{c}$, is
\begin{equation}
R_{c}=W_{c}{\log}_2\left({1+\frac{{|h_{0}|}^2{G_{0}}G_{b}{l}^{-n}_{cb}P_{c}}{\sum\limits_{{d}\in{\textbf{D}}}x_{c,d}{|h_{0}|}^2{G_{0}}G_{b}{l}^{-n}_{db}P_{c}+N_{0c}W_{c}}}\right). \label{eq6}
\end{equation}
The D2D receiver $d$ suffers interference from the cellular user $c$ and the other D2D pairs sharing the same spectrum resource of $c$. Therefore, we can get the following expression of interference power for D2D receiver $d$, denoted by ${P}^{c}_{int,d}$.
\begin{equation}
\begin{aligned}
{P}^{c}_{int,d}&=\sum_{{c}\in{\textbf{C}}}x_{c,d}{|h_{0}|}^2{G_{0}}^{2}{l}^{-n}_{cd}P_{c}\\
&+\sum_{{{d}^{\prime}}\in{\textbf{D}\setminus\{d\}}}\sum_{{c}\in{\textbf{C}}}x_{c,d}x_{c,{d}^{\prime}}{|h_{0}|}^2{G_{0}}^{2}{l}^{-n}_{{d}^{\prime}d}P_{c}. \label{eq7}
\end{aligned}
\end{equation}
According to (\ref{eq7}), we can obtain the received SINR at the D2D receiver $d$, denoted by $SINR^{c}_{d}$, as follows.
\begin{equation}
SINR^{c}_{d}=\frac{{|h_{0}|}^2{G_{0}}^{2}{l}^{-n}_{dd}P_{c}}{{P}^{c}_{int,d}+N_{0c}W_{c}}. \label{eq8}
\end{equation}

In the case of mmWave communication, we can derive the transmission rate of D2D pairs similarly. The interference of D2D receiver $d$ is from the other D2D pairs in mmWave band. Thus, we can obtain the interference power from the other D2D pairs for D2D receiver $d$, denoted by ${P}^{m}_{int,d}$, as follows.
\begin{equation}
{P}^{m}_{int,d}=\sum_{{{d}^{\prime}}\in{\textbf{D}\setminus\{d\}}}(1-a_{{d}^{\prime}}){\rho}k_{0}G_{t}({d}^{\prime},d)G_{r}({d}^{\prime},d){l}^{-n}_{{d}^{\prime}d}P_{m}.
\label{eq9}
\end{equation}
According to (\ref{eq9}), we can get the following received SINR at the D2D receiver $d$, denoted by $SINR^{m}_{d}$.
\begin{equation}
SINR^{m}_{d}=\frac{k_{0}G_{t}(d,d)G_{r}(d,d){l}^{-n}_{dd}P_{m}}{{P}^{m}_{int,d}+N_{0m}W_{m}}. \label{eq10}
\end{equation}

Combining the $SINR^{c}_{d}$ in cellular D2D network and the $SINR^{m}_{d}$ in mmWave D2D network, the SINR received by D2D receiver $d$ in HCN, denoted by $SINR_{d}$, can be calculated as
\begin{equation}
SINR_{d}=a_{d}SINR^{c}_{d}+(1-a_{d})SINR^{m}_{d}. \label{eq11}
\end{equation}
The achievable channel rate for the D2D pair $d$, denoted by $R_{d}$, is give in (\ref{eq12}), shown at the top of the next page.
\newcounter{mytempeqncnt}
\begin{figure*}[!t]
\normalsize
\setcounter{mytempeqncnt}{\value{equation}}
\setcounter{equation}{11}
\begin{equation}
\begin{aligned}
R_{d}&=a_{d}W_{c}{\log}_2\left(1+SINR^{c}_{d}\right)+(1-a_{d})W_{m}{\log}_2\left(1+SINR^{m}_{d}\right)\\
&=a_{d}W_{c}{\log}_2\left(1+\frac{{|h_{0}|}^2{G_{0}}^{2}{l}^{-n}_{dd}P_{c}}{\sum\limits_{{c}\in{\textbf{C}}}x_{c,d}{|h_{0}|}^2{G_{0}}^{2}{l}^{-n}_{cd}P_{c}+\sum\limits_{{{d}^{\prime}}\in{\textbf{D}\setminus\{d\}}}\sum\limits_{{c}\in{\textbf{C}}}x_{c,d}x_{c,{d}^{\prime}}{|h_{0}|}^2{G_{0}}^{2}{l}^{-n}_{{d}^{\prime}d}P_{c}+N_{0c}W_{c}}\right)\\
&+(1-a_{d})W_{m}{\log}_2\left(1+\frac{k_{0}G_{t}(d,d)G_{r}(d,d){l}^{-n}_{dd}P_{m}}{\sum\limits_{{{d}^{\prime}}\in{\textbf{D}\setminus\{d\}}}(1-a_{{d}^{\prime}}){\rho}k_{0}G_{t}({d}^{\prime},d)G_{r}({d}^{\prime},d){l}^{-n}_{{d}^{\prime}d}P_{m}+N_{0m}W_{m}}\right). \label{eq12}
\end{aligned}
\end{equation}
\setcounter{equation}{\value{mytempeqncnt}}
\hrulefill
\vspace*{4pt}
\end{figure*}
\setcounter{equation}{12}

Thus, the achieved system sum rate considering all the cellular users and D2D pairs in HCN, denoted by $R$, can be obtained as
\begin{equation}
R=\sum_{c\in{\textbf{C}}}R_{c}+\sum_{d\in{\textbf{D}}}(a_{d}R_{d}+(1-a_{d})(1-P_{out:d,d})R_{d}), \label{eq13}
\end{equation}
where $P_{out:d,d}$ denotes the probability of blockage in the LOS path between the sender and the receiver of D2D pair $d$ in mmWave band. It can be expressed as $P_{out:i,j}=1-e^{-\beta{l}_{ij}}$, where $l_{ij}$ is the distance between users $i$ and $j$, and $\beta$ is the parameter used to reflect the density and size of obstacles, which result in an interruption caused by blockage \cite{Lee2016Connectivity}.

\subsection{Problem Formulation}\label{S3-3}
Obviously, the system sum rate is related to the resource sharing relations $x_{c,d}$ and $a_{d}, \forall{c}\in{\textbf{C}}, {d}\in{\textbf{D}}$. In view of the relationship between these two binary variables, $\sum\limits_{{c}\in{\textbf{C}}}x_{c,d}=a_{d}$, $\forall{d}\in{\textbf{D}}$, we can define a system utility function that reflects the network performance as the system sum rate, denoted by $R({\mathbf X})$, where ${\mathbf X}$ is the matrix of $x_{c,d}, \forall{c}\in{\textbf{C}}, {d}\in{\textbf{D}}$. Therefore, based on the above analysis, the problem of determining the optimal resource allocation strategy in the D2D communications underlaying HCN to maximize the system sum rate can be formulated as follows.
\begin{align}
&\max\,\,R({\mathbf X})  \notag \\
&s.t.\quad
\begin{cases}
x_{c,d}\in\{0,1\},  \  \forall{d}\in{\textbf{D}}, {c}\in{\textbf{C}};\\
\sum\limits_{{c}\in{\textbf{C}}}x_{c,d}\leq {1}, \  \forall{d}\in{\textbf{D}}. \label{eq14}
\end{cases}
\end{align}

This is a nonlinear integer programming problem, where $x_{c,d}$ is the integer binary variable. In the formulated problem, the optimization utility function in (\ref{eq14}) has no obvious increasing or concave properties with $x_{c,d}$ even the constraint is linear. Obviously, this problem is NP-complete and it is more complex compared with the 0-1 Knapsack problem \cite{pisinger2005hard}. Our optimization problem aims to maximize the system sum rate. In the next section, we propose a coalition formation algorithm from the perspective of game theory to solve the problem with low complexity. For each D2D pair $d$ in the system, or equivalently each player in the game, it makes a decision on selecting the mmWave band or sharing the spectrum of the cellular user $c \ (c\in \textbf{C})$, only for making a greater contribution to the system utility function.

\section{Coalitional Game Approach}\label{S4}
In this section, we present the coalition game from the view point of game theory to solve the formulated resource sharing problem. Based on it, the coalition formation algorithm is proposed.

\subsection{Coalitional Game Formulation} \label{S4-1}
The formulated optimization problem aims to maximize the overall system performance. Based on the problem, we introduce a coalition game theory model, where the D2D pairs tend to form coalitions so that the system utility will improve. In our investigated system, there are $C$ cellular users and $D$ D2D pairs. The D2D pairs can choose to occupy the spectrum resource of any of the $C$ cellular users or use the resource in mmWave band. Thus, we suppose that there are $C+1$ coalitions formed by D2D pairs. We denote the coalitions as $F=\{F_{c_{1}},F_{c_{2}},...,F_{c_{C}},F_{c_{C+1}}\}$, where $F_{c_{x}} \bigcap F_{c_{{x}^{\prime}}}=\emptyset$ for any $x\neq x^{\prime}$, and $\bigcup_{x=1}^{C+1}F_{c_{x}}=\textbf{D}$. The cardinality of $F$ is the number of coalitions. We divide the coalitions into two groups for discussion. The first group is composed of coalitions of $F_{c}\subset F \ (c\in \textbf{C})$ sharing the resource with cellular user $c\in \textbf{C}$. The achieved uplink transmission rate of cellular user $c$ in this case can be written as
\begin{equation}
R_{c}=W_{c}{\log}_2\left({1+\frac{{|h_{0}|}^2{G_{0}}G_{b}{l}^{-n}_{cb}P_{c}}{\sum\limits_{d\in {F_{c}}}{|h_{0}|}^2{G_{0}}G_{b}{l}^{-n}_{db}P_{c}+N_{0c}W_{c}}}\right). \label{eq15}
\end{equation}
The uplink transmission rate of D2D pair $d \ (d\in F_{c})$ is given in (\ref{eq16}), shown at the top of the next page.

\begin{figure*}[!t]
\normalsize
\setcounter{mytempeqncnt}{\value{equation}}
\setcounter{equation}{15}
\begin{equation}
R_{d}=W_{c}{\log}_2\left(1+\frac{{|h_{0}|}^2{G_{0}}^{2}{l}^{-n}_{dd}P_{c}}{{|h_{0}|}^2{G_{0}}^{2}{l}^{-n}_{cd}P_{c}+\sum\limits_{{{d}^{\prime}}\in{F_{c}\setminus\{d\}}}{|h_{0}|}^2{G_{0}}^{2}{l}^{-n}_{{d}^{\prime}d}P_{c}+N_{0c}W_{c}}\right).
\label{eq16}
\end{equation}
\setcounter{equation}{\value{mytempeqncnt}}
\hrulefill
\vspace*{4pt}
\end{figure*}
\setcounter{equation}{16}

Consequently, the rate of the uplink channel shared by cellular user $c$ and D2D pairs $d\in F_{c}$, denoted by $R(F_{c})$, is given by

\begin{equation}
R(F_{c})=R_{c}+\sum\limits_{{d}\in{F_{c}}}R_{d}. \label{eq17}
\end{equation}

The other group is coalition $F_{c}\subset F \ (c=c_{C+1})$ sharing the resource in mmWave band. The channel rate of D2D pair $d \ (d\in F_{c})$ can be written as
\begin{equation}
\begin{aligned}
&R_{d}=\\
&W_{m}{\log}_2\left(1\!+\!\frac{k_{0}G_{t}(d,d)G_{r}(d,d){l}^{-n}_{dd}P_{m}}{\sum\limits_{\mathclap{{{{d}^{\prime}}\in{F_{c}\setminus\{d\}}}}}{\rho}k_{0}G_{t}({d}^{\prime},d)G_{r}({d}^{\prime},d){l}^{-n}_{{d}^{\prime}d}P_{m}\!+\!N_{0m}W_{m}}\right).
\label{eq18}
\end{aligned}
\end{equation}

Therefore, the rate of the channel occupied by D2D pairs $d\in F_{c}$, denoted by $R(F_{c})$, is given by
\begin{equation}
R(F_{c})=\sum\limits_{d\in F_{c}}(1-P_{out:d,d})R_{d}. \label{eq19}
\end{equation}

Obviously the larger the number of D2D pairs in a coalition, the greater the resulting interference among users. In the proposed coalitional game, if all the D2D pairs form a grand coalition to share one cellular user's uplink spectrum resource or the resource in mmWave band, no D2D pair can make a greater contribution to the system utility due to the severe interference. Therefore, all the D2D pairs are with little incentive to form a grand coalition. In addition, the mmWave communication rate is about six orders of magnitude larger than that of cellular communication. Thus, multiple D2D pairs will choose to share the resource in mmWave band, and some of the coalitions sharing the resources of cellular users may be empty for the purpose of maximizing the system sum rate. In this paper, the D2D resource allocation underlaying HCN is modeled in the coalitional game with transferable utilities, where the D2D pairs, as the game players, tend to form coalitions to share the resources of cellular users or mmWave radio resource in order to maximize the system sum profits. Finally, we define the proposed coalitional game with the transferable utility as follows.

\begin{definition}
Coalitional Game With Transferable Utility:
The concept of coalitional game with transferable utility has been first proposed by Morgenstern and von Neumann \cite{von2007theory}. A coalitional game with a transferable utility for D2D resource allocation underlaying HCN is defined by a pair ($\textbf{D},R$), where $\textbf{D}$ is the set of game players and $R$ is the payoff function. Both of them are the basic elements of game theory. $\forall \  F_{c}\subset F, \ R(F_{c})$ is a real number, which represents the sum profits contributed by the entire coalition $F_{c}$,  and it can be assigned to the members of coalition $F_{c}$ in any random way. Next, we define the coalition game for the proposed resource sharing relations.
\label{define1}
\end{definition}

\begin{definition}
Coalitional Game for D2D Resource Allocation:
The coalitional game with transferable utility for resource allocation of D2D communications is defined by the triple $(\textbf{D},R,F)$, where the set of the D2D pairs $\textbf{D}$ is players, $R$ is the transferable utility including the transmission rates of all the users in the coalition, and $F$ is the coalition partition, which can be denoted as $F=\{F_{c_{1}},F_{c_{2}},...,F_{c_{C}},F_{c_{C+1}}\}$, where $F_{c_{x}} \bigcap F_{c_{{x}^{\prime}}}=\emptyset$ for any $x\neq x^{\prime}$, and $\bigcup_{x=1}^{C+1}F_{c_{x}}=\textbf{D}$. It is a strategy for each D2D pair $d$ to make a decision on which coalition to share resources based on the system sum utility.
\label{define2}
\end{definition}

\subsection{Coalition Formation Algorithm}\label{S4-2}
In this subsection, we devise a coalition formation algorithm for the proposed coalition formation game.

One key point in coalition formation is about what strategy to adopt by each D2D pair. In other words, each D2D pair chooses to join one of the coalitions, and then is able to compare and order its potential coalitions based on well-defined preferences. In order to evaluate these preferences, we introduce the concept of preference relation or order in detail \cite{li2014coalitional}, \cite{zhao2015social}.

\begin{definition}
Preference Order

For any D2D pair $i\in \textbf{D}$, the preference relation or order ${\succ}_i$ is defined as a complete, reflexive, and transitive binary relation over the set of all coalitions that D2D pair $i$ can possibly form.
\label{define3}
\end{definition}

Hence, the D2D pairs in our coalitional game have the right to choose to join or leave a coalition according to their preference order, that is to say, the D2D pair tends to join a coalition based on which it prefers to being a member. For any given D2D pair $i\in \textbf{D}$, $F_{c} \ {\succ}_i \ F_{{c}^{\prime}}$ implies that D2D pair $i$ is more willing to be a member of the coalition $F_{c}\subset \textbf{D}$ with $i\in F_{c}$ than $F_{{c}^{\prime}}\subset \textbf{D}$ with $i\in F_{{c}^{\prime}}$, which does not include the case that D2D pair $i$ prefers these two coalitions equally. In different applications, the preferences for D2D pairs can be quantified into different inequalities. In this paper, for any D2D pair $i\in \textbf{D}$ and $i\in F_{c}, F_{{c}^{\prime}}$, we propose the following preference, which is called the utilitarian order \cite{saad2009coalitional}.
\begin{equation}
F_{c} \ {\succ}_{i} \ F_{{c}^{\prime}} \Longleftrightarrow R(F_{c})+R(F_{{c}^{\prime}} \backslash i) > R(F_{c} \backslash i)+ R(F_{{c}^{\prime}}). \label{eq20}
\end{equation}
This definition means D2D pair $i$ prefers being a member of coalition $F_{c}$ than $F_{{c}^{\prime}}$ under the condition that the system sum profit increases.
For forming coalitions based on the above preference order, we define the switch operation as follows.
\begin{definition}
Switch Operation:
Given a partition $F=\{F_{c_{1}},F_{c_{2}},...,F_{c_{C}},F_{c_{C+1}}\}$ of the D2D pairs set $\textbf{D}$, if D2D pair $i\in \textbf{D}$ performs a switch operation from $F_{c}$ to $F_{{c}^{\prime}}$, $F_{c} \neq F_{{c}^{\prime}}$, then the current partition $F$ is modified into a new partition ${F}^{\prime}$ such that ${F}^{\prime}=(F\backslash \{F_{c},F_{{c}^{\prime}}\})\bigcup \{{F_{c}\backslash \{i\},F_{{c}^{\prime}}\bigcup \{i\}}\}$.
\label{define4}
\end{definition}

We initialize the system by any random coalition partition $F=\{F_{c_{1}},F_{c_{2}},...,F_{c_{C}},F_{c_{C+1}}\}$. For any D2D pair $i\in \textbf{D}$, we suppose its current coalition is $F_{c}$, where $F_{c}\subset F$. Then, we uniformly randomly choose another coalition $F_{{c}^{\prime}}$ and suppose the preference relation $F_{{c}^{\prime}} \ {\succ}_{i} \ F_{c}$ is satisfied, where $F_{{c}^{\prime}}\subset F, F_{c}\neq F_{{c}^{\prime}}$, which means a switch operation from $F_{c}$ to $F_{{c}^{\prime}}$ and the current coalition partition will be updated to a new partition ${F}^{\prime}$ as shown in definition \ref{define4}. Actually, the switch operation can be performed if and only if the preference relation defined in (\ref{eq20}) is satisfied. In this mechanism, every D2D pair $i\in \textbf{D}$ can leave its current coalition and join another coalition, given that the new coalition is strictly preferred through the definition in (\ref{eq20}) and the D2D pair can make a greater contribution to the entire system performance in terms of sum rate in the new coalition. In general, our proposed coalition formation game entails finding a coalitional structure that maximizes the total utility rather than the individual payoffs of the players.

\begin{algorithm}[htbp]
\caption{The Coalition Formation Algorithm for the D2D Pairs Resource Allocation}
\label{alg1}
\begin{algorithmic}[1]
\STATE Given any partition $F_{ini}$ of the D2D pairs set $\textbf{D}$;
\STATE Set the current partition as $F_{ini}\longrightarrow F_{cur} , num=0$;
\REPEAT
\STATE Choose one D2D pair $i\in \textbf{D}$ in a pre-determined order, and denote its coalition as $F_{c}\subset F_{cur}$;
\STATE Uniformly randomly search for another possible coalition $F_{{c}^{\prime}}\subset F_{cur},F_{{c}^{\prime}}  \neq  F_{c}$;
\STATE Calculate $R(F_{c})$ and $R(F_{{c}^{\prime}})$;
\IF{The switch operation from $F_{c}$ to $F_{{c}^{\prime}}$ satisfying $F_{{c}^{\prime}} {\succ}_{i} F_{c}$}
\STATE $num=0$;
\STATE D2D pair $i$ leaves its current coalition $F_{c}$, and joins the new coalition $F_{{c}^{\prime}}$;
\STATE Update the current partition set as follows $(F_{cur} \backslash \{F_{{c}^{\prime}},F_{c}\})\bigcup \{F_{c} \backslash \{i\},F_{{c}^{\prime}}\bigcup \{i\}\}\longrightarrow F_{cur}$;
\ELSE
\STATE $num=num+1$;
\ENDIF
\UNTIL The partition converges to the final Nash-stable partition $F_{fin}$.
\end{algorithmic}
\end{algorithm}

The coalition formation game is summarized in Algorithm \ref{alg1}, where the D2D pairs make switch operation in a random order. In the algorithm, we first give any partition $F_{ini}$ of the D2D pairs set $\textbf{D}$. Then, the system will choose one of the D2D pairs in a pre-determined order in step 4. The selected D2D pair saves the coalition $F_{c}$ currently located and then uniformly randomly selects another possible coalition $F_{{c}^{\prime}}$ in step 5. In step 6, the D2D pair obtains the channel information of both coalitions $F_{c}$ and $F_{{c}^{\prime}}$ from BS. Then, it calculates respectively the received sum rate of these two coalitions and makes a decision on whether to perform the switch operation. If the preference relation is satisfied, we update the current coalition partition and reset the number of consecutive unsuccessful switch operations $num$ to zero. Otherwise, we increase the number of consecutive unsuccessful switch operations by 1. When the value of $num$ is equal to multiply the number of D2D pairs by 10 \cite{zhao2015social}, the algorithm stops iterating and performs operations outside the loop. Finally, the system partition will converge to the final Nash-stable partition $F_{fin}$ after a limited number of switching.

\section{Theoretical Analysis}\label{S5}

\subsection{Convergence}\label{S5-1}
In this subsection, the convergence of the proposed coalition formation algorithm is guaranteed as follows \cite{wang2013dynamic}.
\begin{theorem}
Starting from any initial coalitional structure $F_{ini}$, the proposed coalition formation algorithm will always converge
to a final network partition $F_{fin}$, which is consisted by a number of disjoint coalitions, after a sequence of switch operations.
\label{the1}
\end{theorem}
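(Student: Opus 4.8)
The plan is to exhibit the global system sum rate as a monotone potential for the algorithm's dynamics, so that the coalition formation game becomes an exact potential game and convergence follows from a standard finiteness argument. Concretely, I would let $R(F)=\sum_{x=1}^{C+1}R(F_{c_{x}})$ denote the total utility of a partition $F$, i.e. the system sum rate assembled from the per-coalition rates in (\ref{eq17}) and (\ref{eq19}). The key structural observation is that a switch operation alters only the two coalitions directly involved, leaving every other coalition, and hence its contribution to $R$, untouched.

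First I would translate the local acceptance rule into a global statement. Suppose that in the current partition $F$ the D2D pair $i$ sits in $F_{c}$ and considers moving to $F_{c'}$ (which does not yet contain $i$). By Definition \ref{define4} the candidate partition $F'$ differs from $F$ only in that $F_{c}$ becomes $F_{c}\backslash\{i\}$ and $F_{c'}$ becomes $F_{c'}\cup\{i\}$. Unpacking the utilitarian order (\ref{eq20}) for the switch condition $F_{c'}\,{\succ}_{i}\,F_{c}$ --- being careful that, since $i\notin F_{c'}$ in $F$, the term $R(F_{c'}\backslash i)$ equals $R(F_{c'})$ while the coalition that actually receives $i$ is $F_{c'}\cup\{i\}$ --- the acceptance inequality reads exactly
\begin{equation}
R(F_{c}\backslash\{i\})+R(F_{c'}\cup\{i\}) > R(F_{c})+R(F_{c'}). \notag
\end{equation}
Adding the unchanged coalition contributions to both sides shows this is precisely $R(F')>R(F)$. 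Hence every switch that Algorithm \ref{alg1} actually performs strictly increases the global sum rate.

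Next I would invoke finiteness. Since each D2D pair is assigned to exactly one of the $C+1$ coalitions, the set of admissible partitions of $\textbf{D}$ is finite (bounded by $(C+1)^{D}$). Because the sequence of partitions produced by the accepted switches, $F_{ini}=F^{(0)},F^{(1)},F^{(2)},\dots$, satisfies $R(F^{(0)})<R(F^{(1)})<R(F^{(2)})<\cdots$ by the previous step, no partition can ever recur: a repeat would force $R$ to take the same value twice, contradicting strict monotonicity. A strictly increasing real sequence drawn from a finite set of values must terminate, so only finitely many switch operations can occur, after which no beneficial switch remains and the partition reaches a fixed point $F_{fin}$. The coalitions of $F_{fin}$ are disjoint and cover $\textbf{D}$ by construction, which is the claimed conclusion.

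The main obstacle is the first step: one must verify that the local preference test in (\ref{eq20}) is genuinely equivalent to the global increase of $R$, which hinges both on the fact that a switch perturbs only two coalitions and on handling the $\backslash i$ notation correctly when the destination coalition does not yet contain $i$. Once this exact-potential property is secured, the monotonicity-plus-finiteness argument is routine. I would note that the explicit $num$ counter in Algorithm \ref{alg1} is merely a practical stopping device; the convergence itself rests on the potential argument above.
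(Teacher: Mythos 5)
Your proposal is correct, and it follows the same overall skeleton as the paper's proof: the space of partitions of $\textbf{D}$ into at most $C+1$ coalitions is finite, and the sequence of switch operations cannot revisit a partition, hence must terminate. The difference is in how the non-recurrence is justified. The paper's own proof only asserts that ``each switch operation will either yield an unvisited partition through adopting new strategy or switch existing partitions'' and then appeals to the finiteness of the number of partitions (the Bell number); as written, the second alternative does not by itself exclude cycling, so the argument is a sketch that leaves the crucial step implicit. You supply exactly that missing step: by unpacking the utilitarian order in (\ref{eq20}) --- with the correct reading that the destination coalition is $F_{c'}\cup\{i\}$ and $R(F_{c'}\backslash i)=R(F_{c'})$ when $i\notin F_{c'}$ --- and observing that a switch perturbs only the two coalitions involved, you show that every accepted switch strictly increases the global sum rate $R(F)$, so $R$ acts as an exact potential and no partition can recur. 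This makes the convergence argument airtight where the paper merely gestures at it, and it has the added benefit of directly exhibiting the monotone improvement of the objective that the paper later relies on in the optimality discussion. Your closing remark that the $num$ counter is only a practical stopping device, not the source of convergence, is also an accurate reading of Algorithm \ref{alg1}.
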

\begin{proof}
Through careful inspection of the preference defined in (\ref{eq20}), we find that each switch operation in Algorithm \ref{alg1} will either yield an unvisited partition through adopting new strategy or switch existing partitions. As a result, part of coalitions may degenerate into the sets of very few D2D pairs, and even be emptied. The system will form at most $C+1$ partitions as there is only $C$ cellular users plus one mmWave band. As the number of partitions for the already given D2D pairs set $\textbf{D}$ is the Bell number \cite{saad2009coalitional}, we draw the conclusion that the sequence of switch operations will always terminate and converge to a final partition $F_{fin}$, which completes the proof that our proposed coalition formation algorithm is convergent.
\end{proof}

\subsection{Stability}\label{S5-2}

In this subsection, we study the stability of the proposed coalition formation algorithm by using the definition from the hedonic games as follows \cite{feng2016reliable}.

\begin{definition}
Nash-stable Structure:
A coalitional partition $F=\{F_{c_{1}},F_{c_{2}},...,F_{c_{C}},F_{c_{C+1}}\}$ is Nash-stable, if $\forall i\in \textbf{D}, i\in F_{c}\subset F, F_{c} \succ_{i} F_{{c}^{\prime}} \bigcup \{i\}$ for all $F_{{c}^{\prime}} \subset F, F_{{c}^{\prime}}\neq F_{c}$.
\label{define5}
\end{definition}
\begin{theorem}
The final partition $F_{fin}$ in our coalition formation algorithm is Nash-stable.
\label{the2}
\end{theorem}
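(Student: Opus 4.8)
The plan is to prove Nash-stability by contradiction, leveraging the termination guarantee already established in Theorem~\ref{the1}. First I would recall what the algorithm's stopping condition actually enforces: the iteration halts only when no admissible switch operation has been accepted for a full sweep over the D2D pairs, i.e.\ when no player $i\in\textbf{D}$ can find any coalition $F_{c'}\neq F_c$ (its current coalition) for which the preference relation $F_{c'}\succ_i F_c$ of \eqref{eq20} holds. Thus the crux is to reconcile two slightly different-looking inequalities: the switch criterion in \eqref{eq20}, stated in terms of the utilitarian order, and the Nash-stability condition of Definition~\ref{define5}, which is phrased as $F_c\succ_i F_{c'}\bigcup\{i\}$ for every alternative coalition.

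The main step is to show these two conditions are logically equivalent, so that ``no profitable switch exists'' is exactly ``the partition is Nash-stable.'' Suppose, for contradiction, that $F_{fin}$ is \emph{not} Nash-stable. Then by Definition~\ref{define5} there exists some D2D pair $i\in\textbf{D}$ with current coalition $F_c$ and some other coalition $F_{c'}$ such that the Nash condition fails, i.e.\ $F_c\not\succ_i F_{c'}\bigcup\{i\}$. I would unwind this negation through the utilitarian order \eqref{eq20}: failure of $F_c\succ_i (F_{c'}\bigcup\{i\})$ means exactly that moving $i$ from $F_c$ into $F_{c'}$ does not decrease, and in the strict case increases, the total utility $R(F_c)+R(F_{c'})$. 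That is precisely the inequality $R(F_{c'}\bigcup\{i\})+R(F_c\backslash\{i\}) > R(F_{c'})+R(F_c)$, which is the condition $F_{c'}\succ_i F_c$ triggering an accepted switch in steps~7--10 of Algorithm~\ref{alg1}. Hence a profitable switch for $i$ is still available, contradicting the assumption that the algorithm has terminated at $F_{fin}$.

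To close the argument I would invoke Theorem~\ref{the1}: the algorithm is guaranteed to terminate at some final partition $F_{fin}$ after a finite sequence of switches. Termination means the stopping criterion is met, so no player has an accepted switch; by the equivalence just established, this is identical to saying $F_c\succ_i F_{c'}\bigcup\{i\}$ for every $i\in\textbf{D}$ and every $F_{c'}\neq F_c$. Therefore $F_{fin}$ satisfies Definition~\ref{define5} and is Nash-stable, completing the proof.

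The step I expect to be the main obstacle is the precise bookkeeping in the equivalence between \eqref{eq20} and Definition~\ref{define5}, since the two are written with the moving player's membership handled differently ($F_{c'}$ versus $F_{c'}\bigcup\{i\}$, and $F_c$ versus $F_c\backslash\{i\}$). I would be careful to interpret the utility terms consistently---reading $R(F_{c'}\bigcup\{i\})$ as the coalition value \emph{after} $i$ joins and $R(F_c\backslash\{i\})$ as the value \emph{after} $i$ leaves---so that the negated Nash condition and the switch criterion align term-for-term rather than merely in spirit. Once that notational reconciliation is pinned down, the contradiction is immediate and the rest is routine.
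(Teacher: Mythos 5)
Your proposal is correct and follows essentially the same route as the paper's own proof: assume $F_{fin}$ is not Nash-stable, extract a player $i$ with a strictly preferred alternative coalition under the utilitarian order \eqref{eq20}, observe that Algorithm~\ref{alg1} would then still execute a switch, and conclude this contradicts $F_{fin}$ being the terminal partition guaranteed by Theorem~\ref{the1}. Your added care in reconciling the $F_{c'}$ versus $F_{c'}\bigcup\{i\}$ bookkeeping between \eqref{eq20} and Definition~\ref{define5} is a reasonable elaboration of a step the paper leaves implicit, not a different argument.
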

\begin{proof}
The coalition game has the Nash-stable coalitional structure if no D2D pair can make its contribution to the entire system increased by changing its resource sharing strategy.
${F_{c}}^{\ast}$=arg $\max\limits_{F_{c}}R(F), \forall F_{c}\subset F$, and ${F}^{\ast}=\{{F}^{\ast}_{c_{1}},{F}^{\ast}_{c_{2}},...,{F}^{\ast}_{c_{C}},{F}^{\ast}_{c_{C+1}}\}$ is the final Nash-stable coalitional structure.
We prove the stability by contradiction. Assuming that the final formed coalition partition $F_{fin}$ is not Nash-stable. In other words, there exists a D2D pair $i\in \textbf{D}$, and its located coalition currently and randomly selected new coalition are denoted by $F_{c}$ and $F_{{c}^{\prime}}$ respectively.
These two coalitions meet the preference relation $F_{{c}^{\prime}}\bigcup \{i\}\ {\succ}_{i} \ F_{c}$. Consequently, D2D pair will perform the operations leaving its current coalition $F_{c}$ and joining the new coalition $F_{{c}^{\prime}}$, which means that $F_{fin}$ will be updated and it is not the final partition. Thus, we complete the proof that the final partition $F_{fin}$ of our proposed coalition formation algorithm is Nash-stable.
\end{proof}

\subsection{Optimality}\label{S5-3}
\begin{theorem}
The solution obtained by our proposed algorithm corresponds to an optimal system performance.
\label{the3}
\end{theorem}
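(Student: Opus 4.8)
The plan is to show that the utilitarian preference order of (\ref{eq20}) is nothing but a monotone improvement rule for the global objective $R(\mathbf{X})=\sum_{x=1}^{C+1}R(F_{c_x})$, and then to leverage Theorems \ref{the1} and \ref{the2}. First I would make the link between (\ref{eq20}) and the system sum rate precise. By the independence assumption of Section \ref{S3-1}, D2D pairs sharing different cellular users' resources do not interfere, so $R$ is additively separable over coalitions and a switch of pair $i$ from $F_c$ to $F_{c'}$ alters only the two affected terms. Writing the pre-switch partition as $F$ and the post-switch partition as $F'$,
\begin{equation}
R(F')-R(F)=\big(R(F_{c'}\cup\{i\})+R(F_c\setminus\{i\})\big)-\big(R(F_{c'})+R(F_c)\big),
\end{equation}
which is exactly the quantity whose positivity is asserted by the preference relation (\ref{eq20}). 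Hence every switch accepted by Algorithm \ref{alg1} strictly increases $R$, and no accepted switch can ever decrease it.

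Second, I would combine this monotonicity with the two preceding theorems. Theorem \ref{the1} guarantees that the sequence of partitions terminates after finitely many switches (the partition space is bounded by the Bell number), and because $R$ is strictly increasing along accepted switches, no partition is ever revisited and the terminal partition $F_{fin}$ carries the largest value of $R$ among all partitions visited by the run. Theorem \ref{the2} then certifies that $F_{fin}$ is Nash-stable: no single D2D pair has a profitable unilateral switch, i.e. $R(F_{fin})\ge R(F')$ for every $F'$ obtained from $F_{fin}$ by one switch operation. Together these establish that $F_{fin}$ is a local maximizer of $R$ under the single-switch neighborhood, which already delivers the monotone, convergent behaviour promised in Section \ref{S5-1}.

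The hard part is upgrading this local optimality to the global optimality claimed in the statement. The natural route is a proof by contradiction: suppose some partition $F^{\ast}$ satisfies $R(F^{\ast})>R(F_{fin})$ and try to extract from the discrepancy a single pair whose reassignment strictly raises $R$, contradicting Nash-stability. The obstacle is that moving from $F_{fin}$ to $F^{\ast}$ may require relocating several D2D pairs simultaneously, and for the separable-but-NP-complete assignment of (\ref{eq14}) there is in general no single-pair move that improves $R$ even when the partition is suboptimal; Nash-stability simply does not preclude the existence of a strictly better multi-pair configuration. Consequently I expect the single-switch dynamics to certify only a local (Nash-stable) optimum. To make the statement defensible I would therefore either weaken it to near-optimality — which is consistent with the abstract and is precisely what the monotone local search genuinely guarantees — and corroborate closeness to the true optimum numerically in Section \ref{S6}, or else invoke additional structure (for instance running the search from multiple initializations and retaining the best terminal partition, or identifying an exchange/potential property strong enough to force local optima to coincide with the global one). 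Absent such extra structure, the clean monotonicity-plus-Nash-stability argument stops at local optimality, and closing this gap is the crux of the proof.
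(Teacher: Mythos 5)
Your proposal lands in essentially the same place as the paper, but it is considerably more rigorous along the way, so the comparison is worth spelling out. The paper's own proof of Theorem~\ref{the3} does not establish global optimality at all: it notes that the algorithm converges under the termination counter, observes that one-step switching limits the search, asserts on that basis that the solution is \emph{near}-optimal relative to exhaustive search, points to Fig.~\ref{fig4}(a) and Fig.~\ref{fig4}(b) for empirical evidence that the gap is on the order of $0.4\%$--$0.9\%$, and explicitly defers any analytical performance bound to future work. What you supply that the paper leaves implicit is the potential-function argument: since the sum rate is additively separable over coalitions (subcarriers of different cellular users are assumed independent, and the mmWave coalition interferes only internally), the utilitarian order (\ref{eq20}) is exactly the condition $R(F')>R(F)$ for the post-switch partition $F'$, so $R$ strictly increases along every accepted switch and the terminal Nash-stable partition is a local maximizer with respect to single-pair moves. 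That is the strongest conclusion the dynamics actually support, and your diagnosis --- that Nash-stability cannot exclude a strictly better configuration reachable only by relocating several pairs at once --- is precisely the gap the paper acknowledges by retreating to ``near-optimal'' and leaning on simulation. In short, your argument proves everything the paper's proof genuinely establishes, states the local-optimality guarantee cleanly, and correctly flags that the theorem as worded overclaims; the paper resolves that tension exactly as you propose, by weakening the claim to near-optimality corroborated numerically in Section~\ref{S6}.
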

\begin{proof}
The total utility achieved by our proposed coalition formation algorithm is convergent with a sufficiently large number of iterations. In Algorithm \ref{alg1}, we set the termination condition to be that the number of consecutive unsuccessful switch operations $num$ is equal to the product of the number of D2D pairs and 10. On the other hand, our scheme only involves one-step switching and it has the limitation of allowing multiple D2D pairs to perform switch operations simultaneously. Thus, the solution obtained by Algorithm \ref{alg1} is near-optimal compared with the solution obtained by the exhaustive search method. From the Fig. \ref{fig4}(a) and Fig. \ref{fig4}(b) in Section~\ref{S6}, the gap between our scheme and the optimal solution is quite small and the performance of our proposed algorithm is guaranteed. Besides, the in-depth analysis of the performance bound will be carried out in the future work.
\end{proof}

\subsection{Complexity}\label{S5-4}
\begin{theorem}
Given the total number of iterations $N$, the computational complexity of Algorithm \ref{alg1} can be approximated as $O (N)$.
\label{the4}
\end{theorem}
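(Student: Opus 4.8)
The plan is to account for the total computational cost of Algorithm~\ref{alg1} by identifying the per-iteration work and multiplying by the number of iterations $N$. First I would fix attention on a single pass of the \texttt{REPEAT}--\texttt{UNTIL} loop, which constitutes one iteration. Within that pass, the algorithm selects one D2D pair $i$ (step~4, constant time given the pre-determined order), uniformly randomly picks a second coalition $F_{c'}$ (step~5, constant time), and then evaluates the utilities $R(F_c)$ and $R(F_{c'})$ together with the preference test of~(\ref{eq20}) (steps~6--7). The key observation is that each of these operations costs a bounded amount of work that does not scale with $N$: computing a coalition rate from~(\ref{eq17}) or~(\ref{eq19}) involves summing over the members of a single coalition, which is at most $|\textbf{D}|=D$ terms, and the switch-and-update bookkeeping in steps~8--13 is likewise $O(D)$. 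Hence one iteration costs $O(D)$, a quantity independent of $N$.

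Next I would sum over the $N$ iterations. Since the work per iteration is bounded by a constant that depends only on the fixed system size (the numbers $C$ and $D$ of cellular users and D2D pairs, which are treated as given parameters of the problem instance and not as the asymptotic variable), the total cost is $N$ multiplied by this per-iteration bound. Treating the instance size as fixed and taking $N$ as the growth parameter, the per-iteration factor is absorbed into the constant hidden by the $O(\cdot)$ notation, yielding total complexity $O(N)$. This is exactly the claimed bound, and it reflects the genuinely linear dependence of the algorithm's running time on the iteration count—each iteration does a single switch attempt rather than, say, a global search over all partitions.

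The main obstacle, and the point I would take care to state explicitly, is the interpretation of the $O(N)$ claim. The bound is linear in $N$ only because the per-iteration cost is regarded as a constant with respect to $N$; if one instead wanted a bound uniform in the system dimensions, the per-iteration rate evaluations would contribute an additional factor (of order $D$, and implicitly the $O(C+1)$ coalition choices are already subsumed in the random selection). I would therefore make clear in the argument that $C$ and $D$ enter only through the suppressed constant, so that the dominant and only $N$-dependent contribution is the linear accumulation of per-iteration work. This framing is what lets the complexity collapse cleanly to $O(N)$, and clarifying it is the subtle step rather than any difficult calculation.
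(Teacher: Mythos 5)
Your argument is correct and follows essentially the same route as the paper's proof: each iteration involves only the evaluation of two coalition utilities and at most one switch decision, so the cost per iteration is bounded independently of $N$ and the total complexity is $O(N)$. Your explicit remark that the system dimensions $C$ and $D$ are absorbed into the hidden constant is a useful clarification that the paper leaves implicit, but it does not change the substance of the argument.
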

\begin{proof}
In each iteration of Algorithm \ref{alg1}, the selected D2D pair calculates the total utility of currently located coalition and another possible coalition, respectively. Then, it makes a decision on whether to perform a switch operation. Thus, there is at most 1 switch operation to be considered in each iteration, and the complexity lies in the number of iterations. From the Fig. \ref{fig13}, we can see the computational complexity of Algorithm \ref{alg1} is extremely low.
\end{proof}
\begin{table}[htbp]
\begin{center}
\caption{SIMULATION PARAMETERS}
\begin{tabular}{ccc}
\hline
Parameter & Symbol  & Value \\
\hline
mmWave bandwidth & $W_{m}$ & 2160 MHz  \\
Cellular carrier bandwidth & $W_{c}$ & 15 KHz \\
mmWave noise spectral density &  $N_{0m}$  & -134 dBm/MHz \\
Cellular noise spectral density & $N_{0c}$ & -174 dBm/Hz \\
mmWave transmission power & $P_{m}$ & 20 dBm \\
Cellular transmission power&  $P_{c}$ & 23 dBm \\
Path loss exponent  &  $n$   &  2 \\
MUI factor &  $\rho$  & 1 \\
Half-power beamwidth  &   $\theta_{-3dB}$ &    $30^{\circ}$ \\
Blockage parameter  &   $\beta$   & 0.01 \\
Antenna gains of device &   $G_{0}$  &    0.5 dBi \\
Antenna gains of BS &  $G_{b}$    &  14 dBi \\
Maximum distance of D2D  &   $r$   & 10$\sqrt{2}$ m\\
\hline
\end{tabular}
\label{table1}
\end{center}
\end{table}

\section{Performance Evaluation}\label{S6}

In this section, we evaluate the performance of our proposed coalition game under various system parameters. Specially, we compare our scheme with other four schemes in terms of system sum rate. Besides, we give the necessary analysis for the obtained simulation results.

\subsection{Simulation Setup}\label{S6-1}

In the simulation, we consider a single cell scenario, where D2D pairs and cellular users are uniform randomly distributed in a square area of $500m\times 500m$ with the base station in the center. For a fixed number of cellular users and D2D pairs, we repeat the simulation by 20 times and then average the results of positions in order to obtain a more reliable location layout. Not only the path-loss model is considered for cellular and D2D links, but also the shadow fading. Besides, we set the path-loss exponent in free space propagation model to be 2. On the one hand, when two D2D users are physically in close proximity, the D2D communication channel is established. In our simulation, we provide an upper bound on the distance between two D2D users. On the other hand, the widely used realistic directional antenna model is adopted in mmWave D2D network, which is a main lobe of Gaussian form in linear scale and constant level of side lobes \cite{niu2017energy}. Based on this model, the gain of a directional antenna in units of decibel (dB), denoted by $G(\theta)$, can be expressed as

\begin{equation}
G({\theta})=\left\{
\begin{array}{rcl}
G_{0}-3.01\cdot\left({\frac{2\theta}{\theta_{-3dB}}}\right)^2,&& {0^{\circ}\leq {\theta} \leq {{\theta}_{ml}}/2};\\
G_{sl},&& {{{\theta}_{ml}}/2 \leq {\theta} \leq 180^{\circ}};\\
\end{array} \right. \label{eq21}
\end{equation}
where $\theta$ denotes an arbitrary angle within the range $[0^{\circ},180^{\circ}]$, ${\theta}_{-3dB}$ denotes the angle of the half-power beam width, and ${\theta}_{ml}$ denotes the main lobe width in units of degrees. The relationship between ${\theta}_{ml}$ and ${\theta}_{-3dB}$ is ${\theta}_{ml}=2.6\cdot {\theta}_{-3dB}$. $G_{0}$ is the maximum antenna gain, and can be expressed as
\begin{equation}
G_{0}=10\log\left(\frac{1.6162}{\sin(\theta_{-3dB}/2)}\right)^2.
\label{eq22}
\end{equation}
$G_{sl}$ denotes the side lobe gain, which can be obtained by
\begin{equation}
G_{sl}=-0.4111\cdot \ln(\theta_{-3dB})-10.579.
\label{eq23}
\end{equation}

The simulation parameters are summarized in Table \ref{table1} \cite{niu2017energy}. In order to illustrate how cellular and D2D users are distributed and how to share resources, we plot the positions of the base station, cellular users and D2D pairs together in an instance by randomly generating a network consisting of 5 cellular users and 30 D2D pairs in Fig. \ref{fig3}. Besides, we show a snapshot of a final coalition structure resulting from our coalitional formation algorithm. In the figure, the base station, cellular users and D2D pairs are represented by pentacle, triangle and circle respectively. Five cellular users and thirty D2D pairs form six coalitions, and they are marked by different colors of red, green, cyan, dark, yellow and magenta, respectively.

In order to show the advantage of our proposed coalition game in improving system performance in terms of system sum rate $R(F)$, which includes the communication rates of all cellular users and D2D pairs, we compare our scheme, labeled as \textbf{Coalition Game} (CG), with four other schemes :

$a)$ \textbf{Full MmWave Communication} (FMC), where all the D2D pairs are interconnected via direct D2D communications in mmWave band, and each cellular user occupies one of the cellular carrier channels without spectrum sharing.

$b)$ \textbf{Random Communication} (RC), where the system allocates the communication resources to the D2D pairs in a uniform randomly manner. In other words, for any D2D pair, the system randomly selects a cellular user's spectrum resource or the resource in mmWave band.

$c)$ \textbf{Cellular Coalition Game} (CCG), which utilizes coalition game to cope with the problem of the resource allocation among cellular bands for multiple D2D pairs in cellular network. In order to maximize the system total utility, the algorithm performs switch operations based on well-defined preference order with a limited number of iterations.

$d)$ \textbf{Full Cellular Communication} (FCC), which uniform randomly allocates cellular users' uplink spectrum resources to the D2D pairs. Generally speaking, this kind of method is similar to RC, and the difference is that this scheme does not involve mmWave. Since the transmission rate of cellular communication is much smaller than that of mmWave communication, this kind of method represents the worst case of the system performance in terms of sum rate compared with above methods.
\begin{figure}[htbp]
\begin{center}
\includegraphics*[width=0.9\columnwidth,height=2.5in]{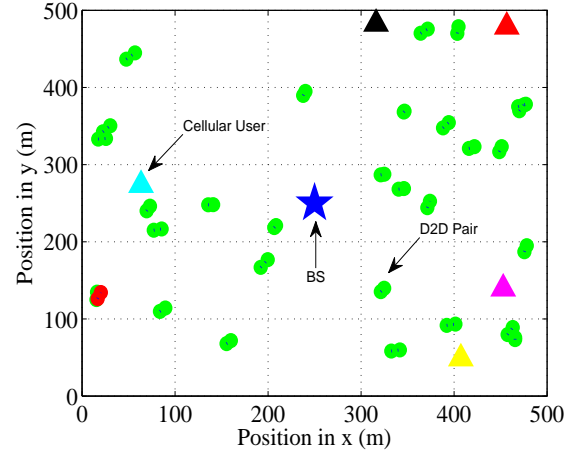}
\end{center}
\caption{A snapshot of a final coalition structure resulting from CG for a network of 5 cellular users and 30 D2D pairs.}
\label{fig3}
\end{figure}

\subsection{Compared With the Optimal Solution}\label{S6-3}

\begin{figure*}[htbp]
\begin{minipage}[t]{0.5\linewidth}
\centering
\includegraphics[width=0.9\columnwidth,height=2.5in]{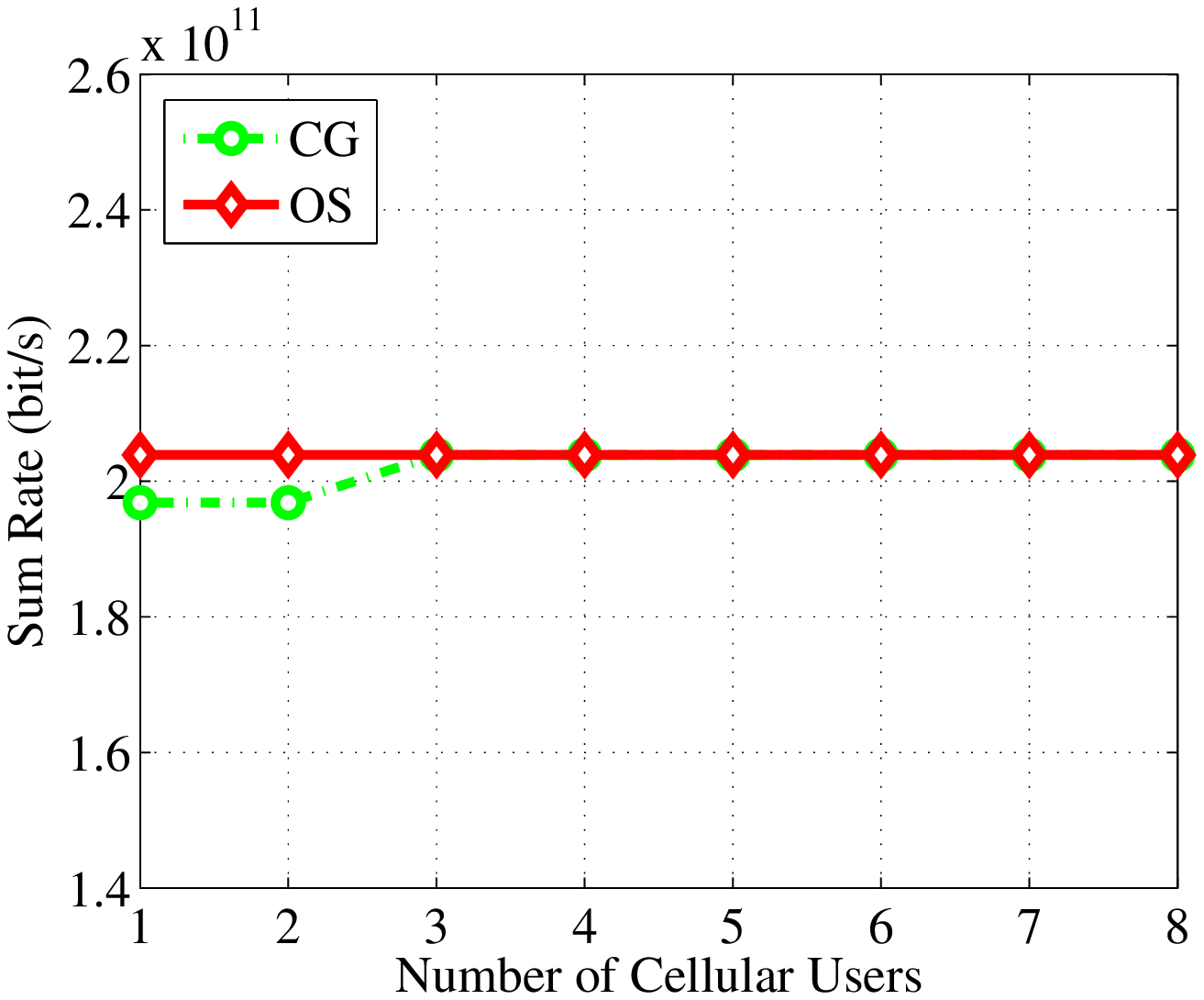}
\centerline{\small (a)}
\end{minipage}%
\begin{minipage}[t]{0.5\linewidth}
\centering
\includegraphics[width=0.9\columnwidth,height=2.5in]{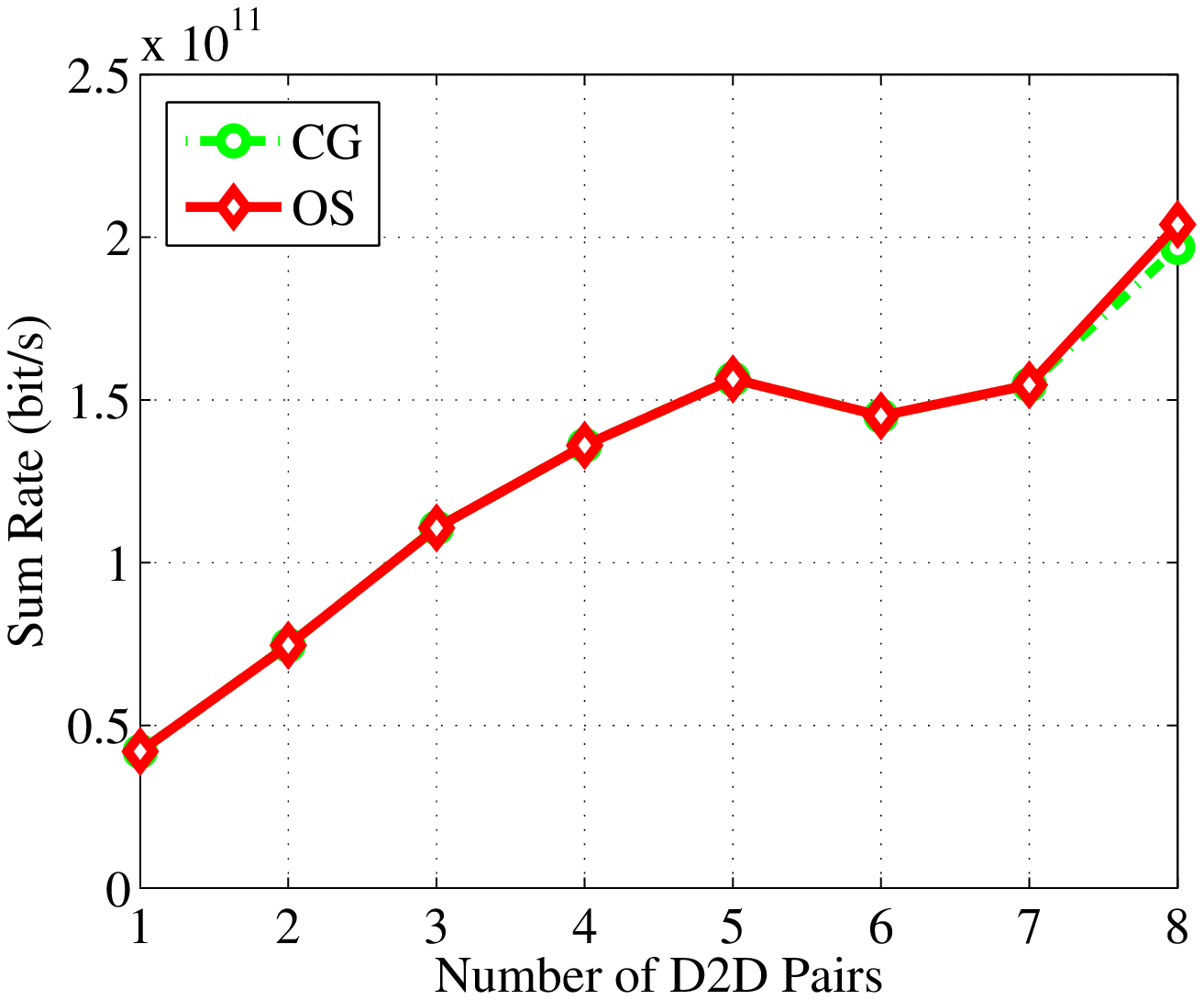}
\centerline{\small (b)}
\end{minipage}
\caption{System sum rate comparison of CG and OS with (a) different number of cellular users and (b) different number of D2D pairs.}
\vspace*{-3mm}
\label{fig4}
\end{figure*}

In this subsection, we compare the performance of CG with the optimal solution, labeled as OS, which is obtained by the traditional exhaustive search method. In view of the highly complexity of this method, we set the number of D2D pairs to be 10 and vary the number of cellular users to be 1 to 8 to obtain the simulation results shown in Fig. \ref{fig4}(a), while set the number of cellular users to be 1 and vary the number of D2D pairs to be 1 to 8 to obtain the simulation results shown in Fig. \ref{fig4}(b). From these two figures, we can see the system sum rate achieved by CG, shown by the dot and dash curve, has an excellent approximation to that achieved by OS, shown by solid line curve. In order to further demonstrate our proposed scheme CG converges close to the OS, we analyze the simulation results in detail and calculate the average deviation between the results obtained by CG and OS, which is expressed as follows.

\begin{equation}
Average \ Deviation = \frac{1}{8}\sum\limits_{n=1}^{8}{\frac{R_{OS}(n)-R_{CG}(n)}{R_{OS}(n)}}, \label{eq24}
\end{equation}
where $R_{OS}(n)$ and $R_{CG}(n)$ denote the system sum rate obtained by OS and CG, respectively, with the number of cellular users or D2D pairs $n$. As a result, the average deviation between the CG and OS is about $0.9\%$ in Fig. \ref{fig4}(a), while the average deviation is about $0.4\%$ in Fig. \ref{fig4}(b). Thus, we complete the demonstration that our proposed coalition game can achieve the system sum rate which is close to the optimal solution of the resource allocation problem.
\subsection{System Sum Rate}\label{S6-3}

In this subsection, we evaluate the performance of our proposed coalition game based resource allocation scheme under various system parameters, and then demonstrate the advantage of this algorithm compared with four other schemes.

\begin{figure}[htbp]
\begin{center}
\includegraphics*[width=0.9\columnwidth,height=2.5in]{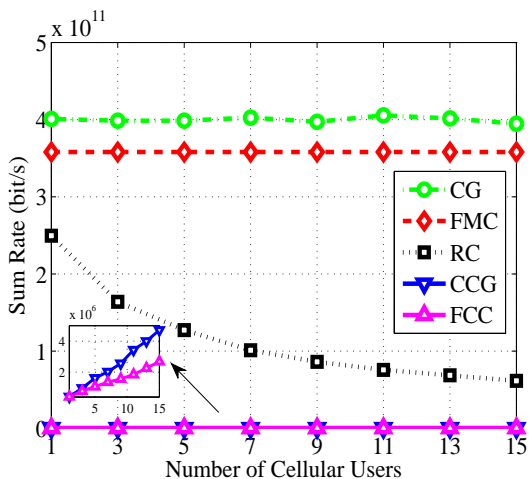}
\end{center}
\caption{System sum rate comparison of five resource allocation algorithms with different number of cellular users.}
\label{fig5}
\end{figure}
In Fig. \ref{fig5}, we set the number of D2D pairs to be 30 and the other parameter settings are shown in Table \ref{table1}. Then, we plot the system sum rate comparison of five schemes varying the number of cellular users from 1 to 15. From the figure, we can observe that the system sum rate of CG has almost no changes as the number
of cellular users increases. It is because that the mmWave communication rate is much greater than that of cellular communication, which results in the increase in the number of D2D pairs using the spectrum resource in mmWave band in order to maximize the system sum rate. In other words, the utility contributed by cellular users and D2D pairs in cellular band accounts for a very small proportion of total system utility. At the same time, the randomness of the CG leads to slight fluctuations in the curve. Comparing these five schemes, the system sum rate received by adopting CG is much larger than other schemes. When the number of cellular users is equal to 15, the sum rate of CG is larger than that of FMC and RC about $10\%$ and $543\%$, respectively. In addition, with the number of cellular users increased, more D2D pairs will uniform randomly choose to share the spectrum resources with cellular users and the number of D2D pairs using the resource in mmWave band is decreased, which explains the change in the RC curve. CCG and FCC increase as the number of cellular users increases. The reason is that the bandwidth resource for the D2D transmission increases. Meanwhile, the cellular users still make a contribution to the system sum rate. For CG, FMC and RC, involving mmWave D2D communications can offload cellular traffic and improve the system performance at the same time.
\begin{figure}[htbp]
\begin{center}
\includegraphics*[width=0.9\columnwidth,height=2.5in]{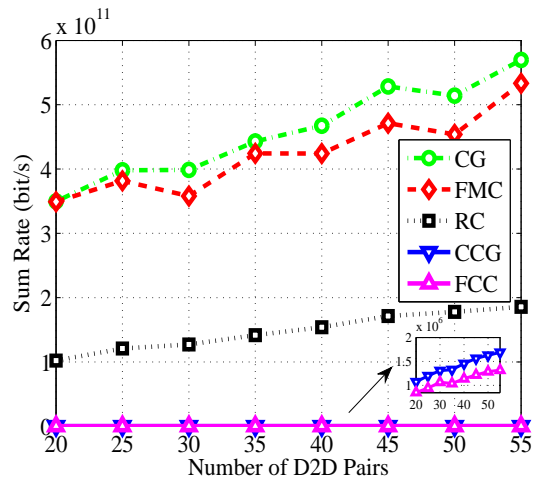}
\end{center} \caption{System sum rate comparison of five resource allocation algorithms with different number of D2D pairs.}
\label{fig6}
\end{figure}

In Fig. \ref{fig6}, we set the number of cellular users to be 5 and vary the number of D2D pairs to be 20 to 55. From the figure, we can see the proposed CG algorithm performs much better than other schemes. When the number of D2D pairs is equal to 55, the sum rate of CG is larger than that of FMC and RC about $7\%$ and $207\%$, respectively. Fig. \ref{fig6} indicates the system sum rate of five schemes increases as the number of D2D pairs increases.
At the same time, different number of D2D pairs makes the change of positions in each simulation, which leads to individual drop points in CG and FMC. With more D2D pairs included in the network, the spectrum utilization can be improved, while the interference caused by spectrum sharing also increases, which constraints the system performance. Besides, the FCC still gets the worst performance, while the FMC, RC and CCG achieve the middle performance.

\begin{figure}[htbp]
\begin{center}
\includegraphics*[width=0.9\columnwidth,height=2.5in]{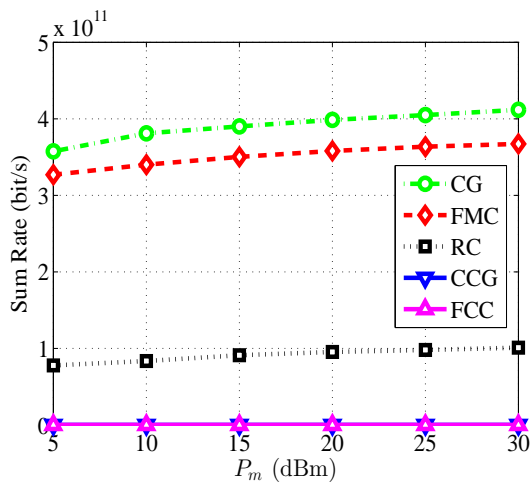}
\end{center}
\caption{System sum rate comparison of five resource allocation algorithms with different $P_{m}$.}
\label{fig7}
\end{figure}

\begin{figure}[htbp]
\begin{center}
\includegraphics*[width=0.9\columnwidth,height=2.5in]{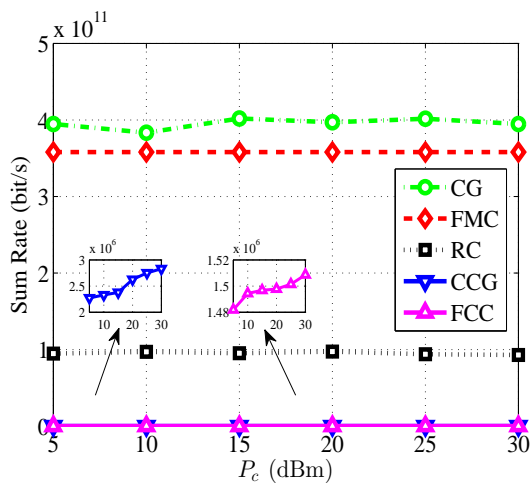}
\end{center}
\caption{System sum rate comparison of five resource allocation algorithms with different $P_{c}$.}
\label{fig8}
\end{figure}

We set the number of cellular users and D2D pairs to be 8 and 30. Fig. \ref{fig7} indicates the system sum rate increases with the mmWave transmission power $P_{m}$ varied from 5 to 30 dBm in CG, FMC and RC. These three curves grow slowly as the result that the corresponding interference power increases and the improvement in sum rate would be less with the $P_{m}$ increased. Compared the behaviors of different schemes, we observe that the CG obtains the highest system sum rate. FCC obtains the lowest system sum rate, while FMC, RC and CCG perform medially. When the mmWave transmission power $P_{m}$ is equal to 30 dBm, the sum rate of CG is larger than that of FMC and RC about $12\%$ and $307\%$, respectively.

Similarly, Fig. \ref{fig8} indicates the system sum rate of CCG and FCC increases with the cellular transmission power $P_{c}$ varied from 5 to 30 dBm, while the effect of $P_{c}$ on CG, FMC and RC is not significant.
When the cellular transmission power $P_{c}$ is equal to 30 dBm, the sum rate of CG is larger than that of FMC and RC about $10\%$ and $325\%$, respectively.

\begin{figure}[htbp]
\begin{center}
\includegraphics*[width=0.9\columnwidth,height=2.5in]{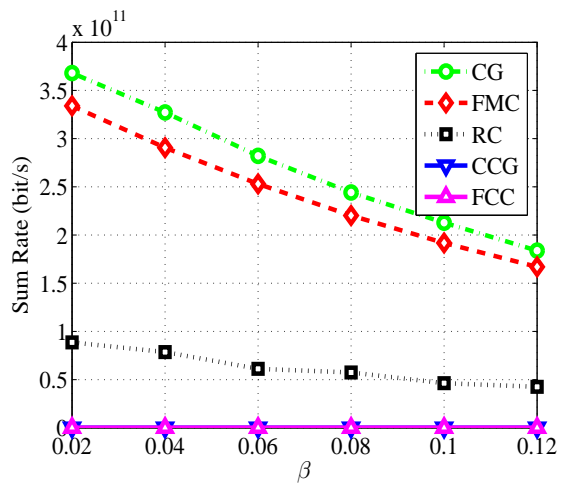}
\end{center}
\caption{System sum rate comparison of five resource allocation algorithms with different $\beta$.}
\label{fig9}
\end{figure}

\begin{figure}[htbp]
\begin{center}
\includegraphics*[width=0.9\columnwidth,height=2.5in]{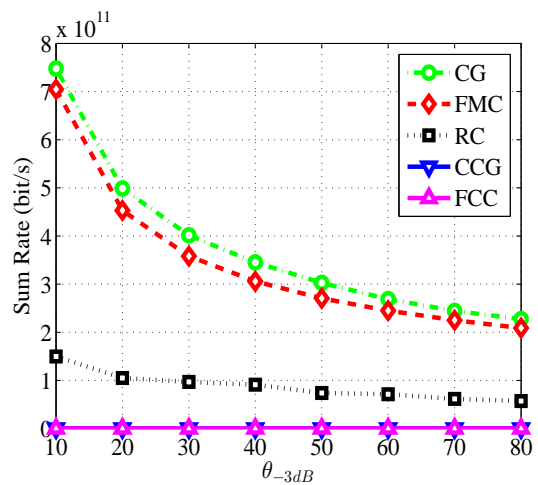}
\end{center}
\caption{System sum rate comparison of five resource allocation algorithms with different $\theta_{-3dB}$.}
\label{fig10}
\end{figure}

In Fig. \ref{fig9}, we set the number of cellular users and D2D pairs to be 8 and 30, respectively, and then vary the $\beta$ from 0.02 to 0.12. In terms of the impact of the blockage parameter that captures the density and size of obstacles, we observe that our proposed scheme again has the best performance. When the blockage parameter $\beta$ is equal to 0.12, the sum rate of CG is larger than that of FMC and RC about $10\%$ and $332\%$, respectively. The greater $\beta$ means obstacles with higher density and larger size, which results in higher blockage probability. In other words, the rate of the mmWave communication channel shared by D2D pairs decreases due to unreliable direct D2D connectivity with $\beta$ increased, which explains the changes of CG, FMC and RC. Besides, the system sum rate of CCG and FCC keeps at a low level and they are not affected by changing $\beta$.

In Fig. \ref{fig10}, we set the number of cellular users and D2D pairs to be 8 and 30, respectively, and then plot the system sum rate comparison of five resource allocation algorithms varying ${\theta}_{-3dB}$ from 10 to 80. The parameter of ${\theta}_{-3dB}$ denotes the angle of the half-power beamwidth adopting the widely used realistic directional antenna model in mmWave D2D network. As the ${\theta}_{-3dB}$ increases, the system sum rate of CG, FMC and RC decreases. This is because the antenna with larger beamwidth covers the wider area, which causes greater interference toward other D2D pairs in mmWave band, and furthermore results in the changes in Fig. \ref{fig10}. From the figure, we can see the CG performs better than other schemes. When the ${\theta}_{-3dB}$ is equal to 80, the sum rate of CG is larger than that of FMC and RC about $9\%$ and $298\%$, respectively.

\begin{figure}[htbp]
\begin{center}
\includegraphics*[width=0.9\columnwidth,height=2.5in]{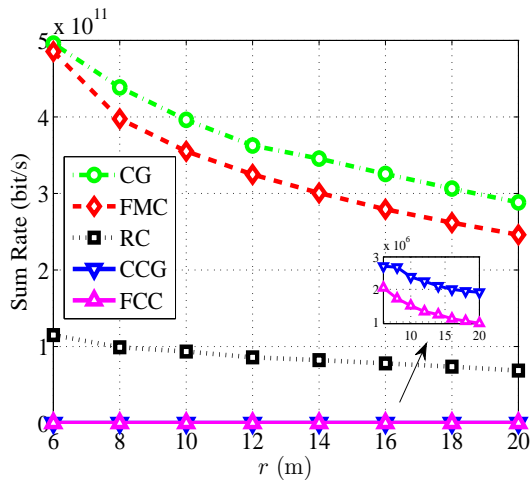}
\end{center}
\caption{System sum rate comparison of five resource allocation algorithms with different $r$.}
\label{fig11}
\end{figure}

\begin{figure}[htbp]
\begin{center}
\includegraphics*[width=0.9\columnwidth,height=2.5in]{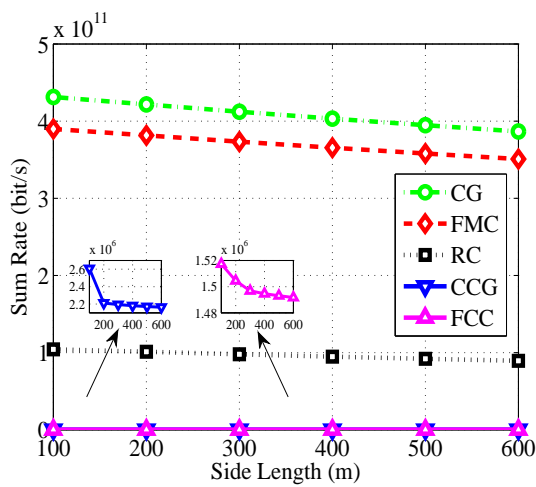}
\end{center}
\caption{System sum rate comparison of five resource allocation algorithms with different side length.}
\label{fig12}
\end{figure}

In Fig. \ref{fig11}, we set the number of cellular users and D2D pairs to be 8 and 30, respectively, and then plot the system sum rate comparison of five resource allocation algorithms varying the maximum distances of both abscissa and ordinate between D2D users from 6 to 20. On the one hand, whether the cellular communication mode, or mmWave communication mode, the increase of $r$ will make the path loss more serious, and thus make the system performance of CG, FMC, RC, CCG and FCC decreased. On the other hand, as the maximum distance of D2D is enlarged, the blockage probability of mmWave communication link increases and the reliability of the link decreases, which furthermore hampers the performance of CG, FMC and RC.

\begin{figure}[htbp]
\begin{center}
\includegraphics*[width=0.9\columnwidth,height=2.5in]{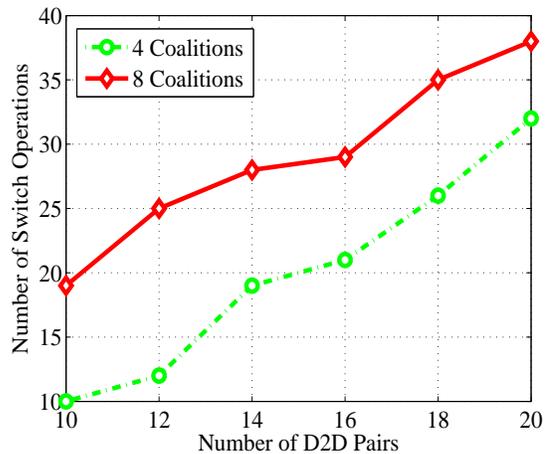}
\end{center}
\caption{System convergence rate in terms of the number of switch operations with different number of D2D pairs.}
\label{fig13}
\end{figure}

In Fig. \ref{fig12}, we set the number of cellular users and D2D pairs to be 8 and 30, respectively, and then plot the system sum rate varying the side length of the square from 100 to 600. In order to obtain simulation results under practical scenarios, we modify the maximum distance of D2D as 2$\sqrt{2}$, 4$\sqrt{2}$, 6$\sqrt{2}$, 8$\sqrt{2}$, 10$\sqrt{2}$ and 12$\sqrt{2}$, respectively. As the side length of the square is enlarged, or equivalently the user distribution density is decreased, the path loss of all links and the blockage probability of mmWave communication links are increased, which directly reduces the system sum rate of all schemes.

\subsection{Convergence Rate}\label{S6-4}

In order to show the convergence rate of our proposed algorithm, we set the number of cellular users to be 3 and 7, or equivalently the number of coalitions to be 4 and 8, and vary the number of D2D pairs to be 10 to 20. In Fig. \ref{fig13}, we show the number of switch operations of CG converging to the final partition. From the figure, we observe that the number of switch operations increases with the number of coalitions or D2D pairs increased. In the cases of 3 and 7 cellular users, the average number of switch operations is from 10 to 32 and 19 to 38, respectively. For the exhaustive search method, each D2D pair can choose to join one of the 8 coalitions when there exists 7 cellular users. The exhaustive search method needs $8^{N}$ iterations to find the optimal solution as the number of D2D pairs is set to be $N$. Therefore, our proposed coalition game algorithm allows D2D pairs and cellular users to form the final Nash-stable partition with extremely fast convergence rate and decreases the computation complexity significantly.
\section{Conclusion}\label{S7}

In this paper, we investigate the problem of maximizing the system sum rate via resource allocation for D2D communications underlaying HCN combining mmWave and the traditional cellular band. After formulating the problem of the uplink resource allocation among mmWave and the cellular band for multiple D2D pairs and cellular users into a non-linear integer programming problem, we propose a coalition game based approach to obtain the near-optimal solution. Through extensive simulations under various system parameters, we demonstrate the superior performance of our proposed coalition game in terms of sum rate compared with four other practical schemes.

\end{document}